\theoremstyle{plain}
\newtheorem{theorem}{Theorem}
\theoremstyle{definition}
\newtheorem*{example}{Example}
\DeclareRobustCommand{\cev}[1]{%
	{\mathpalette\do@cev{#1}}%
}
\newcommand{\do@cev}[2]{%
	\vbox{\offinterlineskip
		\sbox\z@{$\m@th#1 x$}%
		\ialign{##\cr
			\hidewidth\reflectbox{$\m@th#1\vec{}\mkern4mu$}\hidewidth\cr
			\noalign{\kern-\ht\z@}
			$\m@th#1#2$\cr
		}%
	}%
}
\newcommand{\lvec}[1]{\overset{{}_{\leftarrow}}{#1}}
\newcommand\er{\mathsf{e}}
\newcommand\mr{\mathsf{m}}
\newcommand\el{\cev{\mathsf{e}}}
\newcommand\BWT{\mathsf{BWT}}
\newcommand\SA{\mathsf{SA}}
\newcommand\runs{\mathsf{r}}
\newcommand\dd{\,..\,}
\title{Relating Left and Right Extensions of Maximal Repeats}
\author{
	\textsc{Shunsuke Inenaga}\\
	\normalsize Kyushu University, Fukuoka, Japan\\
	\normalsize \texttt{inenaga.shunsuke.380@m.kyushu-u.ac.jp}
	\and
	\textsc{Dmitry Kosolobov}\\
	\normalsize Ural Federal University, Ekaterinburg, Russia\\
	\normalsize \texttt{dkosolobov@mail.ru}
}
\date{}
\begin{document}

\maketitle

\begin{abstract}
The compact directed acyclic word graph (CDAWG) of a string $T$ is an index occupying $O(\er)$ space, where $\er$ is the number of right extensions of maximal repeats in $T$. For highly repetitive datasets, the measure $\er$ typically is small compared to the length $n$ of $T$ and, thus, the CDAWG serves as a compressed index. Unlike other compressibility measures (as LZ77, string attractors, BWT runs, etc.), $\er$ is very unstable with respect to reversals: the CDAWG of the reversed string $\lvec{T} = T[n] \cdots T[2] T[1]$ has size $O(\el)$, where $\el$ is the number of left extensions of maximal repeats in $T$, and there are strings $T$ with $\frac{\el}{\er} \in \Omega(\sqrt{n})$. In this note, we prove that this lower bound is tight: $\frac{\el}{\er} \in O(\sqrt{n})$. Furthermore, given the alphabet size $\sigma$, we establish the alphabet-dependent bound $\frac{\el}{\er} \le \min\{\frac{2n}{\sigma}, \sigma\}$ and we show that it is asymptotically tight.

\noindent\textbf{Keywords:} maximal repeats, right extensions, left extensions, CDAWG
\end{abstract}

\algtext*{EndIf}
\algtext*{EndWhile}
\algtext*{EndFor}

\section{Introduction}

The indexing problem is to construct on a given string $T$ a data structure that, for any string $P$, can find all occurrences of $P$ in $T$. The development of lightweight string indexes has always been one of the central topics in string processing. There are numerous solutions for this problem involving so-called compressed and compact indexes: for a survey, see~\cite{Navarro, Navarro3,Navarro4} and references therein. One of such popular indexes is the compact directed acyclic word graph (CDAWG)~\cite{BelazzouguiCunial,BCGPR,BlumerEtAl2,BlumerEtAl,InenagaEtAl,TakagiEtAl}.

The CDAWG for the input string $T$ can be stored in $O(\er)$ space~\cite{BelazzouguiCunial,Inenaga}, where $\er$ is the number of right extensions of maximal repeats in $T$ (a precise definition is given below). For highly repetitive data, the measure $\er$ is typically small compared to the length $n$ of $T$ \cite{Navarro3,RadoszewskiRytter,Rytter} and, thus, the CDAWG serves as a compressed index. However, $\er$ is not particularly well behaved and, in a sense, it is the weakest compressibility measure compared to, for instance, LZ77~\cite{LZ77}, string attractors~\cite{KempaPrezza}, BWT runs~\cite{BurrowsWheeler}, and several others (see~\cite{Navarro3}). As an example of the bad behaviour, $\er$ is very unstable with respect to reversals (unlike other ``better'' measures): the CDAWG of the reversed string $\lvec{T} = T[n]\cdots T[2]T[1]$ has size $\el$, where $\el$ is the number of left extensions of maximal repeats in $T$, and there is a series of strings $T$ for which $\frac{\el}{\er} \in \Omega(\sqrt{n})$. 

In this note we show that the lower bound $\frac{\el}{\er} \in \Omega(\sqrt{n})$ is tight: $\frac{\el}{\er} \in O(\sqrt{n})$. Furthermore, given the alphabet size $\sigma$, we derive the alphabet-dependent bound $\frac{\el}{\er} \le \min\{\frac{2n}{\sigma}, \sigma\}$ and we describe a series of strings that asymptotically attain this bound, thus proving that it is tight. Observe that the inequality $\frac{\el}{\er} \in O(\sqrt{n})$ follows from the alphabet-dependent upper bound if one sets $\sigma \in \Theta(\sqrt{n})$. Our proofs are surprisingly simple.

Our tight upper and lower bounds for $\frac{\el}{\er}$ imply that improving space bounds from $O(\er+\el)$ to $O(\er)$ can shave a $\Theta(\sqrt{n})$ factor from the space requirements of some CDAWG-based data structures.
Such space improvements are known for storing the CDAWG ($O(\er+\el)$ space in~\cite{TakagiEtAl} and $O(\er)$ space in~\cite{BelazzouguiCunial,Inenaga}) and finding \emph{minimal absent words} ($O(\er+\el)$ space in~\cite{BelazzouguiC15} and $O(\min\{\er,\el\})$ space in~\cite{InenagaMAFF24}).

\subparagraph{Preliminaries.}
For a string $T = c_1 c_2 \cdots c_{n}$, denote by $|T|$ its length $n$. The \emph{reverse} $c_{n}\cdots c_2 c_1$ of $T$ is denoted $\lvec{T}$. We write $T[i]$ for the letter $c_i$ and $T[i \dd j]$ for the \emph{substring} $c_i c_{i + 1} \cdots c_j$, assuming $T[i \dd j]$ is empty if $i > j$. The \emph{empty string} is denoted by $\varepsilon$. We say that a string $S$ \emph{occurs} in $T$ at position $i$ if $T[i\dd i{+}|S|{-}1] = S$. A substring $T[i\dd n]$ is called a \emph{suffix} of $T$ and a substring $T[1\dd i]$ is called a \emph{prefix}. We denote integer segments by $[i\dd j] = \{i, i{+}1, \ldots, j\}$.

A string $S$ is called a \emph{maximal repeat} of a string $T$ if $S$ has at least two occurrences in $T$ and, for each letter $a$, the strings $aS$ and $Sa$ have strictly less occurrences in $T$. Note that the empty string is always a maximal repeat. For a substring $S$ of $T$, a letter $a$ is called a \emph{right} (respectively, \emph{left}) \emph{extension} of $S$ if $Sa$ (respectively, $aS$) occurs in $T$. It is easy to see that a string $S$ that occurs in $T$ at least twice is a maximal repeat of $T$ iff either $S$ is a prefix or it has at least two left extensions, and either $S$ is a suffix or it has at least two right extensions.
Let $M(T)$ denote the sets of maximal repeats in a string $T$.

Throughout the paper, our main focus will be on the number of right and left extensions of maximal repeats in string $T$: respectively, $\er_T = \sum_{S\in M(T)} r_S$ and $\el_T = \sum_{S\in M(T)} \ell_S$, where $r_S$ and $\ell_S$ are respectively the numbers of right and left extensions of $S$ in $T$. Observe that $\el_T = \er_{\lvec{T}}$. We use the notation $\er$ and $\el$ when the index string $T$ is clear from the context.

\begin{example}
	Consider the string $T = {\scriptstyle\diamondsuit}a{\scriptstyle\heartsuit}ab{\scriptstyle\clubsuit}abc{\scriptstyle\spadesuit}abcd$. Its set of maximal repeats is $\{a, ab, abc, \varepsilon\}$. The non-empty maximal repeats have the following respective sets of right extensions (the empty repeat $\varepsilon$ has every letter as its left and right extension): $\{{\scriptstyle\heartsuit},b\}, \{{\scriptstyle\clubsuit},c\}, \{{\scriptstyle\spadesuit},d\}$; and they have the following respective sets of left extensions: $\{{\scriptstyle\diamondsuit},{\scriptstyle\heartsuit},{\scriptstyle\clubsuit},{\scriptstyle\spadesuit}\}, \{{\scriptstyle\heartsuit},{\scriptstyle\clubsuit},{\scriptstyle\spadesuit}\}, \{{\scriptstyle\clubsuit},{\scriptstyle\spadesuit}\}$. Thus, the number of right and left extensions of maximal repeats is $\er = 2+2+2+8 = 14$ and $\el = 4+3+2+8 = 17$.
\end{example}

\section{Upper and Lower Bounds}

\begin{theorem}
For any string $T$ of length $n$, we have $\frac{\el}{\er} \le \min\{\frac{2n}{\sigma}, \sigma\}$, where $\sigma$ is the alphabet size and $\er$ and $\el$ denote the number of right and left extensions, respectively, of all maximal repeats in $T$.\label{thm:upper}
\end{theorem}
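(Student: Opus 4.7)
The plan is to establish the two inequalities $\el \le \sigma\er$ and $\sigma\el \le 2n\er$ separately; together they give $\el/\er \le \min\{\sigma, 2n/\sigma\}$.

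For $\el \le \sigma\er$ I argue pointwise. Every maximal repeat $S$ satisfies $\ell_S \le \sigma$ trivially. It also satisfies $r_S \ge 1$: if $S$ is not a suffix of $T$ then $r_S \ge 2$ by the characterisation recalled in the preliminaries; and if $S$ is a suffix (including $S = \varepsilon$) then, since $S$ occurs at least twice, at least one non-suffix occurrence is followed by a letter, giving a right extension. Hence $\ell_S \le \sigma \cdot r_S$ for every $S$, and summation over $S \in M(T)$ yields $\el \le \sigma\er$.

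For $\sigma\el \le 2n\er$ it suffices to prove $\er \ge \sigma$ and $\el \le 2n$ separately. The first is immediate from $\er \ge r_\varepsilon = \sigma$. Using the reversal identity $\el_T = \er_{\lvec T}$ noted in the preliminaries, the second will follow from a universal inequality $\er_T \le 2n$ valid for every string $T$ of length $n$, and this is the heart of the proof. To establish that bound I augment $T$ with a fresh terminator $\$ \notin \Sigma$ and analyse the suffix tree of $T\$$: it has $n{+}1$ distinct leaves, hence at most $n$ internal nodes (each branching into at least two children) and at most $2n$ edges. The key claim is that every maximal repeat $S$ of $T$ shows up as an internal node of this tree, because either $r_S \ge 2$ already in $T$ (when $S$ is not a suffix of $T$), or $\$$ supplies an extra right extension in $T\$$ in addition to at least one letter-extension coming from some non-suffix occurrence of $S$. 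The out-degree of $S$ in the tree is thus at least $r_S$, so summing over $S \in M(T)$ and bounding by the total edge count yields $\er \le 2n$.

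The only real subtlety I anticipate is the verification that right-maximality in $T$ transfers to right-maximality in $T\$$, so that every maximal repeat of $T$ really is an internal node of the augmented suffix tree; once that is in place, the whole argument collapses to a textbook leaf/edge count together with the reversal identity, matching the paper's promise that the proofs are ``surprisingly simple''.
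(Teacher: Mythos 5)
Your proposal is correct and uses essentially the same decomposition as the paper: $\frac{\el}{\er}\le\sigma$ from the per-repeat inequality $\ell_S\le\sigma r_S$ (the paper phrases this as $\mr\le\er$ and $\el\le\mr\sigma$), and $\frac{\el}{\er}\le\frac{2n}{\sigma}$ from $\er\ge\sigma$ combined with $\el\le 2n$. The only substantive difference is that the paper imports $\el<2n$ from the literature, whereas you re-derive $\er_T\le 2n$ (hence $\el_T=\er_{\lvec{T}}\le 2n$) via the leaf/edge count in the suffix tree of $T\$$; that argument is sound --- the one point you flag, that every maximal repeat of $T$ is still right-branching in $T\$$, holds because a string has at most one suffix occurrence, so a twice-occurring suffix also has a non-suffix occurrence followed by a letter distinct from $\$$ --- and it makes the proof self-contained at no real cost.
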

\begin{proof}
The observation that $\mr \le \er \le \mr\sigma$ and $\mr \le \el \le \mr\sigma$, where $\mr$ is the number of maximal repeats in $T$, implies the bound $\frac{\el}{\er} \le \sigma$. Since the empty maximal repeat has $\sigma$ right extensions, we have $\er \ge \sigma$. Thus, we obtain $\er \frac{n}{\sigma} \ge n$, which implies the inequality $\el < 2\er\frac{n}{\sigma}$ because $\el < 2n$ (see~\cite{BlumerEtAl,CrochemoreVerin2,CrochemoreVerin}). Thus, we obtain the bound $\frac{\el}{\er} \le \frac{2n}{\sigma}$, which leads to the result when combined with $\frac{\el}{\er} \le \sigma$.
\end{proof}

When $\sigma \in \Theta(\sqrt{n})$, Theorem~\ref{thm:upper} gives the upper bound $\frac{\el}{\er} \in O(\sqrt{n})$, which coincides with the known lower bound $\frac{\el}{\er} \in \Omega(\sqrt{n})$ shown in~\cite{Inenaga}. Namely, in~\cite{Inenaga} the following example was described (with reference to~\cite{KarkkainenPers}) that attains this bound over the alphabet $\{a_1,\ldots,a_k, \$_1, \ldots, \$_k\}$:
\begin{equation}
T_k = \$_1 a_1 \$_2 a_1 a_2 \$_3 a_1 a_2 a_3 \$_4 a_1 a_2 a_3 a_4 \$_5 \,\,\cdot\,\,\cdot\,\,\cdot\,\, \$_k a_1 a_2 \cdots a_k.\label{eq:example}
\end{equation}
The set of maximal repeats of $T_k$ is $\{\varepsilon, a_1, a_1a_2, \ldots, a_1a_2\cdots a_{k-1}\}$. Each of the non-empty repeats $a_1a_2\cdots a_i$ has exactly two right extensions $\$_{i+1}$ and $a_{i+1}$, and exactly $k - i + 1$ left extensions $\$_i, \$_{i+1}, \ldots, \$_k$. Therefore, we obtain $\er \in O(k)$ and $\el \ge \sum_{i=1}^{k-1} (k - i + 1) \in \Theta(k^2)$. It is easy to see that $n = |T_k| \in \Theta(k^2)$ and, thus, $\frac{\el}{\er} \in \Theta(k) = \Theta(\sqrt{n})$.

The alphabet size of $T_k$ is $\sigma = 2k \in \Theta(\sqrt{n})$.
Now, to complete the picture, we show that the alphabet-dependent upper bound of Theorem~\ref{thm:upper} is asymptotically tight for any range of the alphabet size.

\begin{theorem}
	For any integers $n > 1$ and $\sigma$ such that $1 < \sigma \le n$, one can construct a string $T$ of length $\Theta(n)$ over an alphabet of size $\Theta(\sigma)$ for which the following inequality holds: $\frac{\el}{\er} \in \Theta(\min\{\frac{n}{\sigma}, \sigma\})$, where $\er$ and $\el$ denote the number of right and left extensions, respectively, of all maximal repeats in $T$.\label{thm:lower} 
\end{theorem}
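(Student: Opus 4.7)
The plan is to split into two regimes by comparing $\sigma$ to $\sqrt{n}$ and, in each, to build $T$ by modifying the example $T_k$ of~(\ref{eq:example}) — which already attains ratio $\Theta(\sqrt{n})$ at the crossover $\sigma=\Theta(\sqrt{n})$. In one direction I concatenate many separated copies of $T_k$ to inflate the length; in the other I append singleton fresh letters to inflate the alphabet. Each modification is designed so that the set of maximal repeats and (up to the empty repeat) their extensions are preserved.

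For $\sigma\ge\sqrt{n}$ (target $\Theta(n/\sigma)$), I set $k=\min(\lceil\sqrt{n}\,\rceil,\lfloor\sigma/2\rfloor)$ and take $T=T_k\,b_1 b_2\cdots b_{\sigma-2k}$ for pairwise distinct fresh letters $b_i$. Since each $b_i$ occurs exactly once, no new maximal repeat is created; the non-empty repeats $a_1\cdots a_i$ ($i<k$) of $T_k$ all sit strictly inside $T_k$ (never adjacent to its endpoint), so their extension sets are unchanged, and only $\varepsilon$ gains $\sigma-2k$ new letters on each side. This yields $|T|\in\Theta(n)$, $\er\in\Theta(k+\sigma)=\Theta(\sigma)$, $\el\in\Theta(k^2+\sigma)=\Theta(n)$, and hence $\el/\er\in\Theta(n/\sigma)$.

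For $\sigma\le\sqrt{n}$ (target $\Theta(\sigma)$), I set $k=\lfloor(\sigma-1)/2\rfloor$, $M=\lceil n/|T_k|\rceil$, and take
\[
  T \;=\; T_k\,\#\,T_k\,\#\,\cdots\,\#\,T_k
\]
with $M$ copies of $T_k$ separated by a fresh letter $\#$; the alphabet has size $2k+1\in\Theta(\sigma)$ and $|T|\in\Theta(Mk^2)=\Theta(n)$. The central claim is that the maximal repeats of $T$ are exactly those of $T_k$ with the same non-empty extensions; granting it, $\er\in\Theta(\sigma)$, $\el\in\Theta(\sigma^2)$, and $\el/\er\in\Theta(\sigma)$. (The tiny corner cases $\sigma\in\{2,3\}$ have $\min\{n/\sigma,\sigma\}\in\Theta(1)$ and can be handled by an ad hoc trivial construction.)

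The main obstacle is the central claim in the second case. One half is routine: each repeat $a_1\cdots a_i$ ($i<k$) occurs only in the interior of every $T_k$ copy, never adjacent to $\#$, so across the $M$ copies it contributes exactly the two right extensions $\{\$_{i+1},a_{i+1}\}$ and the $k-i+1$ left extensions $\{\$_i,\ldots,\$_k\}$ seen inside a single $T_k$. The harder half is that no substring containing $\#$ is a maximal repeat. Here the key is the determinism of the boundary context: starting at any $\#$ and walking left, the forced characters are $a_k,a_{k-1},\ldots,a_1,\$_k$, then the unique suffix of block $k-1$ down to $\$_{k-1}$, and so on across the previous $T_k$; the symmetric statement holds walking right. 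Hence every substring containing $\#$ has exactly one left and exactly one right extension until it grows into a prefix or suffix of the whole $T$, and so fails the maximal-repeat criterion, completing the claim.
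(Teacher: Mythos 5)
Your first regime ($\sigma\ge\sqrt{n}$: pad $T_k$ with fresh singleton letters) is correct, and it is a different route from the paper, which uses a single construction $T=b_1a^k\$\,b_2a^k\$\cdots b_\sigma a^k\$$ with $k=\lceil n/\sigma\rceil$ for \emph{all} regimes. But your second regime has a genuine gap, and it is exactly the loophole you wave away in your last sentence. Write $p=|T_k|+1$, so $T=(T_k\#)^{M-1}T_k$ is periodic with period $p$. For each $j$ with $0\le j\le M-2$, the string $B_j=(T_k\#)^jT_k$ occurs $M-j\ge 2$ times in $T$ and is both a prefix and a suffix of $T$. By the paper's characterization, a repeat that is a prefix \emph{and} a suffix of $T$ is a maximal repeat regardless of how few extensions it has; so each $B_j$ is a maximal repeat of $T$ contributing one left and one right extension. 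Your claim that every substring containing $\#$ has deterministic context ``until it grows into a prefix or suffix of the whole $T$, and so fails the maximal-repeat criterion'' gets the criterion backwards at that point: becoming a prefix and a suffix is precisely how such a string \emph{passes} the criterion. This adds $\Theta(M)=\Theta(n/\sigma^2)$ to both $\er$ and $\el$, giving $\frac{\el}{\er}\in\Theta\bigl(\frac{\sigma^2+n/\sigma^2}{\sigma+n/\sigma^2}\bigr)$, which is $\Theta(\sigma)$ only when $\sigma\in\Omega(n^{1/3})$. For example, with $\sigma\in\Theta(n^{1/4})$ you get $\er,\el\in\Theta(\sqrt{n})$ and ratio $\Theta(1)$ instead of the required $\Theta(n^{1/4})$.

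To repair this you must prevent the $M-1$ borders of a $p$-periodic string from accumulating: either use distinct separators combined with recursive doubling (so only $O(\log n)$ levels of borders arise, at the cost of $O(\log n)$ extra letters), or, as the paper does, collapse all the repetition into a single long maximal repeat. In the paper's string the block $a^k\$$ is followed by a \emph{different} letter $b_{i+1}$ after each occurrence, so instead of $\Theta(M)$ distinct border repeats each with one extension, one gets the single maximal repeat $a^k\$$ with $\sigma$ left and $\sigma-1$ right extensions; the full count is then $\el\in\Theta(k\sigma)=\Theta(n)$ and $\er\in\Theta(k+\sigma)$, which yields $\frac{\el}{\er}\in\Theta(\min\{\frac{n}{\sigma},\sigma\})$ in both regimes at once. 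Your first case stands, but as written the theorem is unproved for $\sigma\in o(n^{1/3})$.
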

\begin{proof}

Consider the string
$$T = b_1 a^k \$ b_2 a^k \$ b_3 a^k \$  \cdots b_\sigma a^k \$$$
over the alphabet $\{\$, a, b_1, \ldots, b_\sigma\}$ of size $\Theta(\sigma$). Since $|T| \in \Theta(k\sigma)$, we obtain $|T| \in \Theta(n)$ by setting $k = \lceil n / \sigma\rceil$. The set of maximal repeats in $T$ is $\{\varepsilon, a, aa, \ldots, a^{k-1}, a^k\$\}$. Each $a^i$, for $1 \leq i < k$, has $\sigma+1$ left extensions $a, b_1, b_2, \ldots, b_\sigma$ and only two right extensions $\$$ and $a$. The repeat $a^k\$$ has $\sigma$ left extensions $b_1, \ldots, b_\sigma$ and $\sigma - 1$ right extensions $b_2, \ldots, b_{\sigma}$. The empty maximal repeat $\varepsilon$ has $\sigma+2$ left and right extensions. Overall, $T$ has $\el \in \Theta(k \sigma) = \Theta(n)$ left extensions and $\er \in \Theta(k + \sigma)$ right extensions.

\begin{itemize}
\item If $\sigma \geq \sqrt{n}$, then since $k = \lceil n / \sigma\rceil$, we obtain $k \le \sigma$.
Hence $\er \in \Theta(k+\sigma) = \Theta(\sigma)$.
Thus $\frac{\el}{\er} \in \Theta(\frac{n}{\sigma}) = \Theta(\min\{\frac{n}{\sigma}, \sigma\})$ holds for the string $T$.

\item If $\sigma < \sqrt{n}$, then since $k = \lceil n / \sigma\rceil$, we obtain $k \ge \sigma$.
  Hence, $\er \in \Theta(k + \sigma) = \Theta(k) = \Theta(\frac{n}{\sigma})$. Thus, $\frac{\el}{\er} \in \Theta(\sigma) = \Theta(\min\{\frac{n}{\sigma}, \sigma\})$ holds for the string $T$.
\end{itemize}
This completes the proof.
\end{proof}

\paragraph{Acknowledgement} 
The authors wish to thank Wiktor Zuba for initial discussions of the problem at StringMasters 2024 in Fukuoka, Japan.

\bibliographystyle{abbrv}
\bibliography{refs}

\end{document}